\documentclass[11pt,reqno]{article}

\usepackage{PRIMEarxiv}
\usepackage{svg}
\usepackage{natbib}
\usepackage[utf8]{inputenc} 
\usepackage[T1]{fontenc}    
\usepackage{hyperref}       
\usepackage{url}            
\usepackage{booktabs}       
\usepackage{amsfonts}       
\usepackage{nicefrac}       
\usepackage{microtype}      
\usepackage{fancyhdr}       
\usepackage{graphicx}       
\usepackage{hyperref}
\usepackage{url}
\usepackage{booktabs}
\usepackage{multirow}
\usepackage{amsmath, amsthm, amssymb, amsfonts,dsfont,mathrsfs}
\usepackage{mathtools}
\usepackage{bbm}
\usepackage{algorithm}
\usepackage[noend]{algpseudocode}
\usepackage{graphicx}
\usepackage{wrapfig}
\usepackage{xspace}
\usepackage{siunitx}
\usepackage[dvipsnames]{xcolor}
\usepackage{listings}

\newcommand{\fat}[1]{\mathds{#1}}

\newcommand{\NN}{\fat{N}}
\newcommand{\ZZ}{\fat{Z}}

\newcommand{\RR}{\fat{R}}

\newcommand{\minus}{\smallsetminus}

\newcommand{\com}{^{{\ast}}}
\newcommand{\comp}{^{{\scriptscriptstyle\complement}}}

\newcommand{\norm}[1]{\big\Vert #1\big\Vert}

\newcommand{\diam}[1]{\mathtt{diam}\!\left(#1\right)}

\newcommand{\unitball}{\mathbb{B}}

\newcommand{\idm}[1]{{\mathbf{I}_{#1}}}

\newcommand{\conv}[1]{\mathtt{conv}\!\left(#1\right)}

\newcommand{\prob}[1]{{\fat{P}\!\left[#1\right]}}

\newcommand{\kldiv}[2]{D_{\!\scriptscriptstyle{K\!L}\!}\!\left(#1\big\Vert #2\right)}

\newcommand{\work}{\Omega}
\newcommand{\sh}{{}^{\scriptscriptstyle{\sharp}}}

\newcommand{\ee}[1]{\mathtt{e}^{#1}}

\newcommand{\bdd}{\bar{d}}
\newcommand{\goal}{x\com}
\newcommand{\upp}[1]{^{\scriptscriptstyle{(#1)}}}
\newcommand{\rsupp}[1]{{}^{\scriptscriptstyle{\sharp(#1)}}}

\newcommand{\wgt}[2]{{\omega_{#1}\upp{#2}}}
\newcommand{\rswgt}[2]{{\omega\rsupp{#2}_{#1}}}

\newcommand{\ts}[1]{{t_{#1}}}
\newcommand{\dts}[1]{\vartriangle\!\!t_{#1}}
\newcommand{\xhat}[2]{{\hat{x}_{#1}\upp{#2}}}

\newcommand{\intent}[2]{{\theta_{#1}\upp{#2}}}
\newcommand{\pt}[2]{{x_{#1}\upp{#2}}}

\newcommand{\rad}[2]{{r_{#1}\upp{#2}}}
\newcommand{\gt}[2]{{t_{#1}\upp{#2}}}

\newcommand{\gain}[2]{{\lambda_{#1}\upp{#2}}}

\newcommand{\qrep}[2]{{q_{#1}\upp{#2}}}

\newcommand{\wdim}{n}

\newcommand{\rmin}{r_{\scriptscriptstyle{m\!i\!n}}}
\newcommand{\rmax}{r_{\scriptscriptstyle{m\!a\!x}}}

\newcommand{\tmin}{T_{\scriptscriptstyle{m\!i\!n}}}
\newcommand{\tmax}{T_{\scriptscriptstyle{m\!a\!x}}}
\newcommand{\tacc}{T_{\scriptscriptstyle{acc}}}

\newcommand{\effsym}{\scriptscriptstyle{e\!f\!f}}
\newcommand{\neff}{{N_{\effsym}}}
\newcommand{\wgteff}{{\omega_{\effsym}}}
\newcommand{\rswgteff}{{\omega\sh_{\effsym}}}
\newcommand{\htheta}{{\hat{\theta}}}

\newcommand{\avgsym}{{\scriptscriptstyle{\mathtt{ }}}}
\newcommand{\redsym}{{\scriptscriptstyle{\mathtt{red}}}}
\newcommand{\particledistro}{\boldsymbol{\pi}}

\newcommand{\hqred}{\hat{q}^\redsym}
\newcommand{\hqavg}{\hat{q}^\avgsym}
\newcommand{\rshqred}{\hat{q}^{\scriptscriptstyle{\sharp}\redsym}}
\newcommand{\rshqavg}{\hat{q}^{\scriptscriptstyle{\sharp}\avgsym}}

\newcommand{\errcov}{{\boldsymbol{\Delta}}}
\newcommand{\discov}{{\boldsymbol{\Sigma}}}
\newcommand{\pcov}[1]{{\mathbf{P}}_{#1}}
\newcommand{\kgain}[1]{{\mathbf{K}}_{#1}}

\newcommand{\major}[1]{\mathcal{M}_{#1}}
\newcommand{\rsmajor}[1]{\mathcal{M}\sh_{#1}}
\newcommand{\allparticles}{\mathcal{P}}

\newcommand{\hu}{\underline{H}}

\newcommand{\rshavg}[1]{H_{#1}^{\scriptscriptstyle{\sharp}\avgsym}}

\newcommand{\rshavglow}[1]{\underline{H}_{{#1}}^{\scriptscriptstyle{\sharp}\avgsym}}

\newcommand{\Fsharp}{F^{\scriptscriptstyle{\sharp}}}
\newcommand{\zsharp}{z^{\scriptscriptstyle{\sharp}}}
\newcommand{\CIbeta}{C_I^{\beta}}

\theoremstyle{plain}
\newtheorem{theorem}{Theorem}

\theoremstyle{definition}

\newtheorem{assumption}{Assumption}
\newtheorem{remark}{Remark}
\newtheorem{lemma}{Lemma}

\title{\LARGE \bf Quantifying Trade-Offs Between Stability and Goal-Obfuscation$^\ast$
}

\title{\LARGE \bf Quantifying Trade-Offs Between Stability and Goal-Obfuscation%
\thanks{This research was supported in part by Air Force Office of Scientific Research award number FA9550-22-1-0429. Any opinions, findings and conclusions or recommendations expressed in this material are those of the author(s) and do not necessarily reflect the views of the sponsoring agency.}}

\author{Yixuan Wang%
\thanks{Yixuan Wang (corresponding author) and Warren E. Dixon are with the Department of Mechanical and Aerospace Engineering, University of Florida. E\,mail: \texttt{\{wang.yixuan,wdixon\}@ufl.edu}.},\,
Dan P.~Guralnik%
\thanks{Dan P. Guralnik is with the Mathematics Department, Ohio University. E\,mail: \texttt{danguralnik@ohio.edu}.},\,
and Warren E.~Dixon}

\begin{document}

\maketitle
\thispagestyle{empty}
\pagestyle{empty}

\begin{abstract}
Safety-critical autonomy in adversarial settings demands more than Lyapunov stability of tracking error signals.
An agent executing a goal-directed trajectory is intrinsically legible to a passive observer running online Bayesian inference, because the contractive dynamics of any Lyapunov basin of attraction concentrates posterior belief over the latent intent parameters.
This article initiates the study of intent privacy over a continuous state space as a joint control problem on the physical state combined with the latent belief state of a putative observer.

With the main challenges concentrated around the analysis of the belief-state dynamics, the agent dynamics is assumed to be simple, modeled by the differential inclusion $\dot{x}\in u+\bar{d}\mathbb{B}$.
That is, the agent is fully actuated with bounded unknown disturbance to the control input. 
The observer's intent inference process is modeled as a discrete-time stochastic dynamical system evolving over the belief state space of a Rao--Blackwellized particle filter (RBPF) reasoning over large random samples of possible agent goals.
The agent's control input is modeled as a piecewise constant signal, with jumps matching the RBPF update times.
Building on a prior intent-inference framework and its online-computable, KL-based information leakage measurement, a privacy constraint is imposed, which amounts to maintaining information leakage above a prescribed threshold with high probability, using probabilistic discrete-time control barrier functions (PCBFs).
A key technical contribution is the derivation of separate PCBF results for the Bayesian update step and the resampling step of the RBPF, enabling a PCBF result for the full update as well as integration of the privacy constraint with the agent's task-side tracking requirement.
%
%
%
Finally, a joint feasibility analysis is carried out by examining the interplay between the privacy constraint and the tracking envelope.
%
\end{abstract}

\section{Introduction}
Control Lyapunov functions and control barrier functions provide a standard mechanism for synthesizing real-time safety critical controllers through online convex programs~\cite{ames2017cbf, ames2019cbf_survey}.
In this paradigm, a controller is minimally perturbed at each timestep so that the resulting input simultaneously drives a Lyapunov function toward zero and keeps a barrier function nonnegative, yielding joint guarantees of stability and constraint satisfaction through a single quadratic program~\cite{ames2017cbf}.
The framework has been deployed across a wide range of domains including legged locomotion, adaptive cruise control, and multi-robot coordination~\cite{ames2019cbf_survey}.

An implicit assumption underlying Lyapunov-based controller synthesis is that closed loop stability is decidedly a desirable property.
%
%
Classically, the structural aim of Lyapunov-based design is to produce closed loop dynamics with normally hyperbolic invariant manifold (NHIM) structure~\cite{wiggins1994nhim}, in which the basin of attraction decomposes as a product of the goal-parameterized attractor with a normal fiber on which the dynamics acts contractively.
Consequently, if the attractor structure is known to (or could be inferred by) an external observer up to a finite-dimensional parameter vector, the contractive geometry of the closed loop trajectories concentrates the observer's posterior over those parameters with increasing speed as the agent converges to the goal.
For a goal-directed agent, the relevant latent parameters are the goal location, goal radius, and desired arrival time, and this legibility problem is exacerbated by the fact that task optimal trajectories are precisely those most concentrated near the attracting manifold.
The work of~\cite{dragan2013legibility} showed that trajectories optimizing a task cost are inherently legible: a Bayesian observer can rapidly concentrate posterior belief over the agent's goal from a short observation window, and intentional departure from task optimality is necessary to reduce this legibility.
Complementary work treats the observer's inference problem as inverse planning~\cite{ramirez2009plan_recognition, baker2009action} or maximum entropy inverse reinforcement learning~\cite{ziebart2008maxent}, confirming that goal-stabilizing trajectories are informationally richest in precisely the regimes where Lyapunov methods are most effective.

Prior work~\cite{wang2025goal} formalized and quantified this legibility for continuous-state systems of the form studied here.
There, the adversary models the agent's intent and maintains $N\gg 0$ weighted particles to recursively estimate the agent's intent from noisy position measurements via a Rao-Blackwellized particle filter (RBPF), demonstrating that under nominal conditions the adversary rapidly recovers the agent's true intent, underscoring the practical severity of the legibility problem for any agent operating under Lyapunov-based control.

The present paper addresses the control synthesis question left open by~\cite{wang2025goal}:
given a capable adversary running the RBPF inference framework, how should the agent select its control input to delay inference concentration while simultaneously satisfying task stability and tracking requirements?

\medskip
\noindent\textbf{Contributions.}
The contributions of this paper are as follows.
First, the intent privacy task is formulated as a piecewise constant (intermittent actuation) control problem over the latent belief dynamics on the probability simplex induced by RBPF weight updates (extended by the RBPF's state estimation variables), rather than as an auxiliary objective on the physical state.

Second, a finite horizon privacy-in-probability guarantee for the RBPF update cycle is derived.
The technical challenge of analyzing the resampling step is overcome by observing that the RBPF update may be regarded as the composition of two probabilistically independent random transformations of the belief state space: the Bayesian update, and the resampling step.
Probabilistic bounds on the increment in information leakage are derived for each of these transformations, and combined into a PCBF result for the overall RBPF update, following the PCBF framework developed in~\cite{mestres2025probabilistic}.
It is noteworthy that the bound for the resampling transformation is completely independent of the dynamics and control input, motivating future work on extensions to more complex dynamics of the agent.

Third, a joint feasibility analysis is carried out, examining the interplay between the privacy constraint and the tracking constraint via affine interpolation between two types of control input, one selected for obfuscation and the other for optimized tracking.
The tensions between the constraints are revealed as inequality constraints on the interpolation parameter, which may be seen to be jointly feasible when the prescribed tracking error envelope is not overly restrictive.
Further study of convergence properties of the RBPF is necessary to produce results establishing the infeasibility of obfuscation if practical stability of tracking errors is enforced.

\section{Background}
This article leans mainly on development from two sources.
Section~\ref{intent inference} reviews~\cite{wang2025goal}, where RBPFs for intent inference were introduced.
Section~\ref{sec:pre pcbf} reviews necessary results from~\cite{mestres2025probabilistic} on probabilistic barrier functions.

\subsection{Intent Inference}\label{intent inference}
\noindent{\bf Intent Model.} The agent's intent is modeled as a triple $\theta\com\triangleq(\goal,r\com,t\com)$, constrained to a compact domain $\Theta$, where $\goal\in\RR^n$ is the goal center,
$r\com > 0$ is the goal radius, and $t\com>0$ is the desired arrival time; the corresponding goal set is the ball $\goal+r\com\unitball$, which the agent aims to reach by time $t\com$.
The adversary assumes that, if the (unknown) intent were $\theta$, then the agent's closed-loop motion is goal-stabilizing up to an unknown disturbance bounded in norm by a known $\Bar{d}$:
\begin{equation}
  \dot{\hat{x}} \in f_\theta(\hat{x}) + \bar{d}\unitball,\quad
  f_\theta(\hat{x})\triangleq -\lambda(\theta)(\hat{x} - x(\theta)),\nonumber
\end{equation}
where $\lambda(\theta) \triangleq \max\left\{
  \frac{\bar{d}}{r(\theta)}, 
  \frac{1}{t(\theta)}\log\frac{R}{r(\theta)}
\right\}$,
ensuring that the agent will enter the goal ball $\goal+r\com\unitball$ by time $t\com$.\\

\noindent{\bf RBPF State Representation.}
The adversary maintains $N$ weighted particles $\{(\intent{k}{i}, \wgt{k}{i})\}_{i=1}^N$ to estimate $\theta^*$.
We distinguish three stages within each timestep $k$:
\begin{enumerate}
  \item \emph{Previous step weights} $\wgt{k-1}{i}$: weights at the end of timestep $k-1$ (post resampling if resampling occurred).
  \item \emph{Pre-resampling weights} $\rswgt{k}{i}$: weights after Bayesian update at timestep $k$, before resampling.
  \item \emph{Post resampling weights} $\wgt{k}{i}$: final weights at timestep $k$ after resampling (if triggered).
\end{enumerate}

\textbf{Initialization.} The $N$ particles are assigned uniform weights $\wgt{0}{i} \triangleq 1/N$, with positions $\pt{0}{i}$ sampled uniformly from the workspace $\Omega$, radii $\rad{0}{i}$ sampled uniformly from $[\rmin, \rmax]$, and times $\gt{0}{i}$ sampled uniformly from $[\tmin, \tmax]$.
The corresponding product distribution is denoted by $\particledistro$.
The initial state estimate corresponding to each particle $i$ is set to the initial measurement: $\xhat{0}{i} := y_0$.

\textbf{Propagation Step.} Let $t_k$ denote the discrete observation update times and let $\dts{k}\triangleq t_k-t_{k-1}$.
Throughout this paper, we assume that $\dts{k} = \Delta t$ for all $k$, where $\Delta t$ is a constant.
Discrete Euler integration of the dynamics provides the prior $\xhat{k}{i-}$:
\begin{align*}
  \xhat{k}{i-} = \xhat{k-1}{i} + \dts{k} f_{\intent{k-1}{i}}(\xhat{k-1}{i}) + \xi_k, \quad \xi_k \sim \mathcal{N}(0, \dts{k}^2 \discov),
\end{align*}
where $\discov \triangleq (\sigma \bdd)^2 I_\wdim$ and $\sigma > 0$ is a user selected parameter.
The prior error covariance and Kalman gain are
\begin{align*}
  \pcov{k}^- = (\gain{k-1}{i})^2 \pcov{k-1} + \discov, \quad \kgain{k} = \pcov{k}^- (\pcov{k}^- + \errcov)^{-1},
\end{align*}
yielding the state posterior and posterior error covariance:
\begin{align*}
  \hat{x}_k^{\rsupp{i}} = \xhat{k}{i-} + \kgain{k}(y_k - \xhat{k}{i-}), \quad \pcov{k} = (I - \kgain{k})\pcov{k}^-.
\end{align*}

\textbf{Bayesian Update Step.} Given observation $y_k$ at timestep $k$, the particle weights are updated via
\begin{align}\label{eqn:weight update}
  \rswgt{k}{i} = \frac{\wgt{k-1}{i} p(y_k |  \hat{x}_k\rsupp{i})}{\sum_{j=1}^N \wgt{k-1}{j} p(y_k |  \hat{x}_k\rsupp{j})},
\end{align}
where the likelihood is Gaussian: $p(y_k | \hat{x}_k^{\rsupp{i}}) \sim \mathcal{N}(\hat{x}_k^{\rsupp{i}}, \errcov)$ with symmetric positive definite covariance $\errcov$.

\textbf{Resampling Step.} When the effective sample size
\begin{align}\label{eqn:effective sample size}
  \neff \triangleq \left\lfloor \left(\textstyle\sum_{i=1}^N (\rswgt{k}{i})^2\right)^{-1} \right\rfloor
\end{align}
drops below a threshold $N_0$, resampling is triggered.
Let $\rsmajor{k} \subseteq \allparticles \triangleq \{1,\ldots,N\}$ denote the indices of the $\neff$ highest weight particles.
Define the effective mass before resampling as
\begin{align*}
  \rswgteff \triangleq \textstyle\sum_{a \in \rsmajor{k}} \rswgt{k}{a}.
\end{align*}
In~\cite[Proposition 1]{wang2025effective}, it is shown that
\begin{align*}
    1-\rswgteff\leq\frac{N-\neff}{N}\left(
        1-\sqrt{\frac{N-\neff-1}{(\neff+1)(N-1)}}
    \right).
\end{align*}
Each retained particle $a \in \rsmajor{k}$ is replicated $N_a \triangleq \lfloor \rswgt{k}{a} N / \rswgteff \rfloor$ times.
Let $\major{k}$ denote the set of indices of replicated effective particles. 
$\major{k}$ has cardinality $\sum_{a \in \rsmajor{k}} N_a$.
The remaining particles are reinitialized by sampling from a distribution $\particledistro$ over $\Theta$.
After resampling, all weights are distributed uniformly, $\wgt{k}{i} = \tfrac{1}{N}$ for all $i\in\allparticles$.
If no resampling is triggered ($\neff \geq N_0$), then set $\major{k}:=\rsmajor{k}$ and $\wgt{k}{i} := \rswgt{k}{i}$ for all $i \in \allparticles$.
%

\begin{remark}
\label{remark}
  The intent parameter $\intent{k+1}{i} = \intent{k}{i}$ remains constant unless the $i^\mathrm{th}$ particle is discarded during resampling.
  In other words, the intent parameter attached to any particle is held constant throughout the particle's lifetime.
  Similarly, $\xhat{k}{i}=\hat{x}_k^{\rsupp{i}}$ for all $i$ and $k$, if no resampling occurs.
  When resampling is triggered at timestep $k$, each retained particle inherits the state estimate and error covariance from its parent, and each reinitialized particle $b$ draws a fresh intent parameter value $\intent{k}{b}\sim\particledistro$ and a fresh initial position $\xhat{k}{b}$ drawn uniformly from the workspace $\work$.
\end{remark}

\noindent{\bf Two estimators. } At any time $t=\ts{k}$, the $N$ weighted particles $(\intent{k}{i},\wgt{k}{i})_{i=1}^N$ give rise to estimators of $\theta\com$, as follows.
Any such estimator, ultimately, is a random variable, which may be regarded as a measurement of $\theta\com$.
Thus, each $\theta\in\Theta$ ought to be represented by a probability density $q_\theta$ over $\Theta$ with mean $\theta$.
Following our earlier convention, denote $\qrep{k}{i}\triangleq q_{\intent{k}{i}}$, for simplicity.
For  simplicity, the $q_\theta$ are selected as product distributions over $\widetilde\Theta\triangleq\RR^\wdim\times(0,\infty)^2$ of the form
\begin{equation}\label{eqn:probabilistic representation of intent}
  \begin{aligned}
    &q_\theta \sim \mathcal{N}(x(\theta),\sigma_x^2\idm{\wdim}) \otimes \mathcal{R}\otimes \mathcal{T},\\
    &\log(\mathcal{R}) = \mathcal{N}(r(\theta), \sigma_r^2),\\
    &\log(\mathcal{T}) = \mathcal{N}(t(\theta), \sigma_t^2),
  \end{aligned}
\end{equation}
where $\sigma_x,\sigma_r,\sigma_t$ are system parameters representing the capacity of the adversary for precise localization and timing and guaranteeing that an overwhelming probability is assigned to the event $\Theta\subset\widetilde\Theta$.
The complete and reduced estimators, pre-resampling, are defined as
\begin{equation}\label{eqn:weighted avg estimator pre resampling}
  \rshqavg_k\triangleq\textstyle\sum_{i\in\allparticles} \rswgt{k}{i}\qrep{k-1}{i},\quad
  \rshqred_k\triangleq\textstyle\sum_{i\in \rsmajor{k}}\frac{\rswgt{k}{i}}{\wgteff}\qrep{k-1}{i}.\nonumber
\end{equation}
If no resampling takes place, then $\qrep{k}{i}=\qrep{k-1}{i}$ for all $i\in\allparticles$ and we obtain
\begin{equation}\label{eqn:weighted avg estimator}
  \hqavg_k\triangleq\textstyle\sum_{i\in\allparticles} \wgt{k}{i}\qrep{k}{i},\quad
  \hqred_k\triangleq\textstyle\sum_{i\in \major{k}}\frac{\wgt{k}{i}}{\wgteff}\qrep{k-1}{i},
\end{equation}
where $\wgteff\triangleq\sum_{a\in\major{k}}\wgt{k}{i}$.
If resampling takes place then, taking note of the re-enumeration of all particles, Equation~\eqref{eqn:weighted avg estimator} is replaced by
\begin{equation}\label{eqn:weighted avg estimator post resampling}
  \hqavg_k\triangleq\textstyle\tfrac{1}{N}\sum_{i\in\allparticles}\qrep{k}{i},\quad
  \hqred_k\triangleq\textstyle\tfrac{1}{N}\sum_{i\in \major{k}}\qrep{k}{i}.
\end{equation}
In this paper, it is sufficient to employ the complete estimator, since the variation in the information lower bound across different estimators is only an additive constant.

\noindent{\bf KL-based information leakage measurement.}
Let $H_{k}\sh$ and $H_{k}$ denote the information leakage from the true intent distribution to the estimated one at time step $k$ using the complete estimators pre-resampling and post-resampling, respectively,
%
\begin{equation}\label{eqn:information leakage definition}
\begin{aligned}
  H_{k}\triangleq\kldiv{q_{\theta\com}}{\hqavg_k},\quad H_{k}^{\scriptscriptstyle{\sharp}}\triangleq\kldiv{q_{\theta\com}}{\rshqavg_k}.\nonumber
\end{aligned}
\end{equation}
\begin{theorem}\label{thm:bound} Let $S_{\nu, k}^{\scriptscriptstyle{\sharp}}$ denotes the weighted exponential sum before resampling
\begin{align*}
    S_{\nu, k}^{\scriptscriptstyle{\sharp}} &\triangleq
        \textstyle\sum_{j \in \allparticles}\rswgt{k}{j}\gamma_\nu(\intent{k}{j}),
\end{align*}
where
\begin{align*}
    \gamma_\nu(\theta) \triangleq \ee{-\tfrac{\norm{\nu(\theta\com)- \nu(\theta)}^2}{4\sigma_{\nu}^2}}
\end{align*}
is the kernel.
The following lower bounds hold:
  \begin{equation}\label{Info lower bound}
    \begin{aligned}
      \rshavg{k} &\geq \rshavglow{k} \triangleq-\log\left(
        \sigma_x^{1-\wdim}(\frac{\ee{}}{2})^{\frac{\wdim+2}{2}}
      \right)-\sum_{\nu\in\{x,r,t\}}\log S_{\nu,k}^{\scriptscriptstyle{\sharp}}.\nonumber
    \end{aligned}
  \end{equation}
\end{theorem}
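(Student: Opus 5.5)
We bound the KL divergence $\kldiv{q_{\theta\com}}{\rshqavg_k}=-h(q_{\theta\com})-\mathbb{E}_{q_{\theta\com}}[\log\rshqavg_k]$, where $h$ is differential entropy. The idea is to upper-bound the cross term by a Gaussian overlap integral, and then handle each of the three factors $x,r,t$ separately.

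\emph{Step 1 (cross term).} Write $p=q_{\theta\com}$ and $\rshqavg_k=\sum_j\rswgt{k}{j}q_j$ with $q_j=\qrep{k-1}{j}$. Instead of applying Jensen to $\log$ of the mixture, which gives a bound in the wrong direction, we use
\begin{equation*}
  -\mathbb{E}_p[\log \rshqavg_k]\;\geq\;-\log\int p\,\rshqavg_k \;=\;-\log\textstyle\sum_j\rswgt{k}{j}\int p\,q_j .
\end{equation*}
This follows from Jensen applied to the concave function $\log$ under the measure $p$: $\mathbb{E}_p[\log \rshqavg_k]\le\log\mathbb{E}_p[\rshqavg_k]$. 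Combining with the entropy term gives
\begin{equation*}
  \rshavg{k}\;\geq\;-h(p)-\log\textstyle\sum_j\rswgt{k}{j}\int p\,q_j .
\end{equation*}

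\emph{Step 2 (product structure).} Both $p$ and $q_j$ are product densities over the $x$, $r$ and $t$ factors, so
\begin{equation*}
  \int p\,q_j=\prod_\nu\int p_\nu q_{j,\nu}.
\end{equation*}
For two Gaussians with equal variance $\sigma_\nu^2$ (in dimension $d_\nu$) and means $m$, $m'$, the overlap is
\begin{equation*}
  (4\pi\sigma_\nu^2)^{-d_\nu/2}\,\ee{-\norm{m-m'}^2/(4\sigma_\nu^2)}.
\end{equation*}
For the log-normal factors $r$ and $t$, change variables to $\log r$ and $\log t$. The overlap then becomes a Gaussian overlap in log space with an extra Jacobian factor $\mathbb{E}[1/R]$, which can be absorbed. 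The resulting kernels are exactly $\gamma_\nu(\intent{k}{j})$. The parameters entering these overlaps are those of the particles before resampling in step $k$; by Remark~\ref{remark} these coincide with $\intent{k}{j}$.

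\emph{Step 3 (from a product to the three sums).} The overlap term is $\sum_j\rswgt{k}{j}\prod_\nu\gamma_\nu(\intent{k}{j})$. Since each $\gamma_\nu\in(0,1]$, this is at most $\min_\nu S_{\nu,k}^{\scriptscriptstyle{\sharp}}$. The stated bound subtracts $\sum_\nu\log S_{\nu,k}^{\scriptscriptstyle{\sharp}}$, which is weaker than subtracting $\log\min_\nu S_{\nu,k}^{\scriptscriptstyle{\sharp}}$ because each $-\log S_{\nu,k}^{\scriptscriptstyle{\sharp}}\ge 0$. So I would either use this inequality directly, or note that $\prod_\nu S_{\nu,k}^{\scriptscriptstyle{\sharp}}\le\min_\nu S_{\nu,k}^{\scriptscriptstyle{\sharp}}$.

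\emph{Step 4 (constants).} Finally, compute the constant from the entropies and normalizations:
\begin{equation*}
  -h(p)+\tfrac{n}{2}\log(4\pi\sigma_x^2)+\cdots,
\end{equation*}
with $h$ of the Gaussian part equal to $\tfrac n2\log(2\pi\ee{}\sigma_x^2)$, giving $\tfrac n2\log\tfrac{2}{\ee{}}$. The log-normal entropies and Jacobians contribute the $r$ and $t$ pieces, in which $\sigma_r$ and $\sigma_t$ cancel. Together these should produce $-\log\bigl(\sigma_x^{1-n}(\tfrac{\ee{}}{2})^{\frac{n+2}{2}}\bigr)$.

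\emph{Main obstacle.} I expect the main difficulty to be this constant bookkeeping, especially for the log-normal factors. There the means $\mathbb{E}[\log R]$ enter through the Jacobian term $-\mathbb{E}_p[\log R]$, and these must cancel or be bounded uniformly over the compact set $\Theta$. A further wrinkle is that the stated constant still contains $\sigma_x$, which shows that the normalizations do not fully cancel; the exponent $1-n$ must be tracked carefully. If the exact constant does not match, I would fall back on bounding the log-normal mean terms using $\Theta$ compact, accepting a slightly different additive constant. This is consistent with the paper's remark that the constants differ only additively across estimators.
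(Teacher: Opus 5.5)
Step~1 (Jensen under $q_{\theta\com}$) is correct. Step~3, however, runs the comparison in the wrong direction, and this is a genuine gap rather than bookkeeping. What Steps~1--2 give is
\[
\rshavg{k}\;\geq\;c-\log T_k,\qquad T_k\triangleq\textstyle\sum_{j}\rswgt{k}{j}\prod_{\nu}\gamma_\nu(\intent{k}{j}),
\]
and $T_k\leq\min_\nu S_{\nu,k}^{\scriptscriptstyle{\sharp}}$ only yields $\rshavg{k}\geq c-\log\min_\nu S_{\nu,k}^{\scriptscriptstyle{\sharp}}$. Every $-\log S_{\nu,k}^{\scriptscriptstyle{\sharp}}\geq 0$, so $-\sum_\nu\log S_{\nu,k}^{\scriptscriptstyle{\sharp}}\geq-\log\min_\nu S_{\nu,k}^{\scriptscriptstyle{\sharp}}$. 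The target is therefore \emph{stronger} than what you proved, and $\prod_\nu S_{\nu,k}^{\scriptscriptstyle{\sharp}}\leq\min_\nu S_{\nu,k}^{\scriptscriptstyle{\sharp}}$ does not help.

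To conclude you would need $T_k\leq\prod_\nu S_{\nu,k}^{\scriptscriptstyle{\sharp}}$, i.e.\ that the three kernels are negatively correlated under the weights. This fails when they are comonotone, and then the stated inequality itself fails. Put weight $\epsilon$ on particles equal to $\theta\com$, and weight $1-\epsilon$ on a particle so far from $\theta\com$ in every coordinate that each $\gamma_\nu\leq\epsilon$ there. This is possible once the $\sigma_\nu$ are small relative to the extent of $\Theta$. Then $\rshqavg_k\geq\epsilon\,q_{\theta\com}$ pointwise, so $\rshavg{k}\leq\log\frac{1}{\epsilon}$. Meanwhile $S_{\nu,k}^{\scriptscriptstyle{\sharp}}\leq 2\epsilon$ for each $\nu$, so the claimed right-hand side is at least $c'+3\log\frac{1}{2\epsilon}$. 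This exceeds $\log\frac{1}{\epsilon}$ for small $\epsilon$, whatever the additive constant $c'$. The gap between $-\log T_k$ and $-\sum_\nu\log S_{\nu,k}^{\scriptscriptstyle{\sharp}}$ is thus unbounded, and your fallback of accepting a different additive constant cannot absorb it. The honest output of your route keeps the joint sum $T_k$.

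The constant bookkeeping also does not work as described. In the original coordinates the log-normal overlap is $(4\pi\sigma_r^2)^{-1/2}\gamma_r(\theta)\,\ee{-\frac{r(\theta\com)+r(\theta)}{2}+\frac{\sigma_r^2}{4}}$, and similarly for $t$. The Jacobian factor therefore depends on the particle, cannot be absorbed into a constant, and leaves $\sigma_r^2/4$ and $\sigma_t^2/4$ behind.

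The clean fix is to use the invariance of KL divergence under the bijection $(x,\rho,\tau)\mapsto(x,\log\rho,\log\tau)$, applied to both arguments (the mixture maps to the mixture of the images). Then perform the Jensen step in these coordinates, where every factor is Gaussian. The terms $\frac{n}{2}\log(4\pi\sigma_x^2)-\frac{n}{2}\log(2\pi\ee{}\sigma_x^2)$ and their $r$, $t$ analogues give exactly $\frac{n+2}{2}\log\frac{2}{\ee{}}$, with every $\sigma_\nu$, including $\sigma_x$, cancelling.

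So there is no exponent $1-n$ to track, because this computation cannot produce $\sigma_x^{1-n}$. At the level of constants, yours implies the stated one only when $\sigma_x\leq 1$. For large $\sigma_x$ the stated bound already fails when all $\intent{k}{j}=\theta\com$, where $\rshavg{k}=0$ and all $S_{\nu,k}^{\scriptscriptstyle{\sharp}}=1$. This is consistent with the fact that rescaling $x$ changes $\sigma_x$ but neither $\rshavg{k}$ nor the $\gamma_x$ values. What your method actually proves is
\[
\rshavg{k}\;\geq\;-\log\Bigl(\bigl(\tfrac{\ee{}}{2}\bigr)^{\frac{n+2}{2}}\Bigr)-\log\textstyle\sum_{j}\rswgt{k}{j}\prod_{\nu}\gamma_\nu(\intent{k}{j}).
\]
The paper quotes Theorem~\ref{thm:bound} from prior work without proof, so nothing in it bridges either discrepancy.
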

The departure point of this paper is the crucial observation that the lower bounds produced by Theorem~\ref{thm:bound} are time-independent functions of the information state of the putative adversary's RBPF.
As a result, these functions could be put to use as control-barrier functions.

\subsection{Probabilistic safety and PCBF}
\label{sec:pre pcbf}
Consider a discrete-time controlled stochastic system
\begin{equation}
  x_{t+1} = F(x_t,u_t,d_t), \quad t\in\ZZ_{\ge 0},\nonumber
\end{equation}
where $x_t\in\mathcal{X}\subseteq\RR^n$ is the state, $u_t\in\mathcal{U}\subseteq\RR^m$ is the control input constrained to an admissible set $\mathcal{U}$, and $d_t\in\mathcal{D}$ is an exogenous random disturbance.
We assume $\{d_t\}_{t\geq 0}$ is an i.i.d. sequence with common distribution $\mathcal{P}_d$ (equivalently, $d_t\sim \mathcal{P}_d$).
Unless stated otherwise, probabilities below are taken with respect to $\mathcal{P}_d$ (and any additional exogenous randomness entering $F$), conditioned on the current $(x_t,u_t)$.

{\bf Safe set and finite-horizon probabilistic safety.} Let $h:\mathcal{X}\to\RR$ be a measurable function defining the safe set
\begin{equation}
  C := \{x\in\mathcal{X}:\ h(x)\ge 0\}.\nonumber
\end{equation}
Given a horizon $H\in\NN$ and a failure tolerance $\varepsilon\in(0,1)$, finite-horizon probabilistic safety requires
\begin{equation}
  \prob{x_t\in C \middle| x_0}\geq 1-\varepsilon \quad \text{for all } t=0,1,\dots,H.
  \label{eqn:finite_horizon_prob_safety}
\end{equation}

{\bf One-step PCBF condition.} Fix $\alpha\in[0,1]$ and $\delta\in(0,1)$.
Following \cite{mestres2025probabilistic}, the function $h$ is called a $\delta$-PCBF if for every $x\in C$ there exists an admissible input $u=u(x)\in\mathcal{U}$ such that
\begin{equation}
  \prob{h \left(F(x,u,d)\right) \geq \alpha h(x) \big| x,u} \geq 1-\delta,
  \label{eqn:pcbf_one_step}
\end{equation}
where the probability is taken over $d\sim\mathcal{P}_d$.
In online control, \eqref{eqn:pcbf_one_step} is enforced at time $t$ with the chosen input $u_t$:
\begin{equation}
  \prob{h(x_{t+1}) \geq \alpha h(x_t) \middle| x_t,u_t} \geq 1-\delta.
  \label{eqn:pcbf_one_step_online}
\end{equation}

{\bf From one-step to horizon guarantee.} Let $\pi:\mathcal{X}\to\mathcal{U}$ be a feedback policy and set $u_t=\pi(x_t)$.
If \eqref{eqn:pcbf_one_step_online} holds for all $t=0,1,\dots,H-1$ whenever $x_t\in C$, then the probability of remaining in $C$ for $H$ steps is lower-bounded by
\begin{equation}
  \prob{x_t\in C\middle|x_0\in C}\geq(1-\delta)^H \text{ for all } t=0,1,\dots,H.\nonumber
\end{equation}
Consequently, to satisfy \eqref{eqn:finite_horizon_prob_safety} it suffices to choose $\delta$ such that $(1-\delta)^H\geq 1-\varepsilon$, i.e.,
\begin{equation}
  \delta \leq 1-(1-\varepsilon)^{1/H}.\nonumber
\end{equation}

\section{Problem Formulation}
We consider two principal entities, an adversary and an agent, whose positions evolve in $\RR^\wdim$, $\wdim\in\{2,3\}$, corresponding to planar ($n=2$) or spatial ($n=3$) motion.
The agent has the goal of reaching the ball $\goal+r\com\unitball$ within a time $t\com$.
The goal must be reached without disclosing the intent parameter $\theta\com$ to the adversary, while maintaining a time-dependent bound $\varrho(t)$ on the error $\norm{x(t)-x_d(t)}$, where $x_d$ is determined by $\theta\com$ and the initial position $x_0\in\work$ of the agent.
The adversary’s objective is to infer the agent’s intent parameter $\theta\com$ by observing the agent’s trajectory.
The adversary is assumed to be a passive observer, focusing the analysis on the agent's control decisions aimed at obscuring $\theta\com$ from the adversary.
Note that the adversary is modeled internally by the agent solely as a proxy for intent-inference risk.
Accordingly, we do not explicitly model the adversary’s observation sampling rate, nor do we address the question of how the agent would access or verify the adversary’s inferred belief in a real deployment.

For simplicity, the agent is assumed to be fully actuated, but subject to bounded disturbances,
\begin{align}\label{eqn:dynamics}
    \dot{x}\in u+\bar{d}\unitball,
\end{align}
where the state $x\in\RR^\wdim$ is assumed to be measurable by the agent, $u\in\mathcal{U}\triangleq\RR^n$ is the agent's control input, and $\bar{d}>0$ is a known constant.
The task is encoded by the parameter $\theta\com$, whose values are confined to the domain
\begin{equation}
    \Theta\triangleq R\unitball\times[\rmin,\rmax]\times[\tmin,\tmax],\nonumber
\end{equation}
where $R>0$, $0<\rmin\ll\rmax\ll R$, and $0<\tmin\ll\tmax$ are parameters of the problem, fixed in advance.
The notation $\theta=(x(\theta),r(\theta),t(\theta))$ for $\theta\in\Theta$ will be used when there is no risk of ambiguity.
For each $\theta\in\Theta$ and $q\in\RR^\wdim$, a smooth reference path
\begin{align*}
    \label{eqn:reference paths}
    &x_{q,\theta}:[0,\infty)\to\RR^\wdim, 
    x_{q,\theta}(0)=q,  x_{q,\theta}(t(\theta))=x(\theta),
\end{align*}
is available to the agent, whose task, given $x(0)=q$, is to track this path by selecting control inputs $u(x,t)$ such that the tracking error $e(t)\triangleq x(t)-x_{q,\theta}(t)$ satisfies
\begin{equation}\label{eqn:envelope constraint}
    \norm{e(t)}\leq\varrho(t), 
    \varrho(t(\theta))=r(\theta)
\end{equation} 
for some increasing function $\varrho:[0,\infty)\to[0,\infty)$.
In an unobstructed (homogeneous and isotropic) environment, it is natural to select
\begin{equation}\label{eqn:reference path:interval}
    x_{q,\theta}(t):=q+\tfrac{t}{t(\theta)}(x(\theta)-q)
\end{equation}

The agent is assumed to be aware of the presence of an adversary attempting to infer the intent parameter $\theta\com$ from its behavior.
Therefore, in addition to~\eqref{eqn:envelope constraint}, the agent is tasked with selecting a controller whose resulting trajectory maintains the adversary's estimate $\htheta$ of the task parameter $\theta\com$ at an acceptable level of ambiguity within a time $\tacc<t(\theta)$.

\section{Probabilistic Information Safety Bounds}\label{sec:pcbf_privacy}
This section applies the PCBF framework to the RBPF belief state update to obtain finite-horizon privacy-in-probability guarantees.
\subsection{Information-state dynamics}\label{subsec:info_state_dynamics}
The observation $y_k$ received by the adversary at each timestep $k$ is generated by the agent's physical state, which is shaped by the control input $u$ through the dynamics~\eqref{eqn:dynamics}.
Consequently, the agent's choice of $u$ influences the RBPF belief update through the induced observation distribution, motivating a formulation in which $u$ appears explicitly as an argument of the belief transition.
Define the information state (of the RBPF) as\footnote{The information state of the RBPF at time $k$, in fact, contains also the covariance matrices $\mathbf{P_k}$, which may be treated as latent variables in the current application since the quantities of interest do not depend on them.}
\begin{equation}
    z \triangleq (\boldsymbol{\theta},\hat{\mathbf{x}},\omega)\in\mathcal{Z},\nonumber
\end{equation}
where $\boldsymbol{\theta}=(\theta\upp{i})_{i\in\allparticles}$ is the particle data, $\hat{\mathbf{x}}$ is the vector of state estimates associated with each particle, and $\omega$ is the RBPF weight vector.
%
%
The information dynamics is represented in controlled stochastic form as
\begin{equation}\label{eqn:info_dynamics_controlled}
    z' = F_I(z,u,\eta),
\end{equation}
where $u \in\RR^n$ is the agent's control input, $\eta$ aggregates all exogenous randomness driving the belief update including the observation noise $\xi$ and the resampling uncertainty $\zeta$, and 
$z'=(\boldsymbol{\theta}',\hat{\mathbf{x}}',\omega')$ denotes the successor information state.
As long as the agent's control input is decided independently of history, the following assumption is merited.
\begin{assumption}[Controlled Markov Property]\label{assump:markov_info}
The successor information state $z'$ is a random variable induced by~\eqref{eqn:info_dynamics_controlled}, and the sequence $\{\eta_k=\eta_k(t_k)\}_{k\ge 0}$ is i.i.d.
\end{assumption}
Consequently, when evaluated along a trajectory $z_k \triangleq z(t_k)$ at the $k$-th observation time, the sequence $\{z_k\}_{k\ge 0}$ is a controlled Markov process with transition kernel $\prob{z'\in \cdot\mid z,u}$.
\begin{remark}\label{rem:time_independent}
The subscript $k$ appears hereafter only when referring to the value of a quantity along a specific trajectory; all maps, barrier functions, and constraint sets considered in Sections~\ref{sec:pcbf_privacy} and~\ref{discussion} are time independent functions of $z$.
\end{remark}

To perform the analysis, the transition is decomposed as
\begin{equation}
  F_I(z,u,\eta) = \mathcal{R}(F^{\scriptscriptstyle{\sharp}}(z,u,\xi),\zeta),\nonumber
\end{equation}
where $F\sh$ is the Bayesian update producing the pre-resampling information state, $\xi$ is the observation noise, and $\zeta$ is the resampling noise.
The intermediate state $F\sh(z,u,\xi)$ has the form
\begin{equation}
  z^{\scriptscriptstyle{\sharp}}\triangleq (\boldsymbol{\theta},\hat{\mathbf{x}}^{\scriptscriptstyle{\sharp}},\omega^{\scriptscriptstyle{\sharp}}),\nonumber
\end{equation}
noting that the Bayesian update of the RBPF has no effect on the intent parameters.
The information leakage at state $z$ is equal to 
\begin{align*}
    H(z)\triangleq\kldiv{q_{\theta\com}}{\textstyle\sum_{i\in\allparticles}\wgt{}{i}q_{\theta\upp{i}}}.
\end{align*}
It is helpful to define the quantities
\begin{align*}
  S_\nu(z) &\triangleq \textstyle\sum_{j\in\allparticles} \wgt{}{j}\gamma_\nu(\theta\upp{j}),\quad
\end{align*}
where $\theta\upp{j}$ is the $j$-th particle held in the information state $z$.
Note that $S_\nu(z)$ is a function of the information state alone.
The post-resampling and pre-resampling information leakage lower bounds from Theorem~\ref{thm:bound} then take the time-independent forms
\begin{align*}
  H(z)\geq\hu(z) &\triangleq C - \textstyle\sum_{\nu\in\{x,r,t\}}\log S_\nu(z),
\end{align*}
where $C \triangleq -\log \left(\sigma_x^{1-\wdim}\left(\tfrac{\ee{}}{2}\right)^{\frac{\wdim+2}{2}}\right)$ is a constant depending only on system parameters.
Then, $\hu(z_k)\leq H_k$, $\hu(z^{\scriptscriptstyle{\sharp}}_k)\leq H^{\scriptscriptstyle{\sharp}}_k$ hold pointwise along any trajectory.
Similarly,
\begin{align*}
    H(z)&= \textstyle\int q_{\theta\com}\log \frac{q_{\theta\com}}{\sum_j \wgt{}{j} q\upp{j}}
    \leq \int q_{\theta\com} (-\sum_j \wgt{}{j}\log \frac{q\upp{j}}{q_{\theta\com}})\\
    &=\textstyle\sum_j \wgt{}{j}\int q_{\theta\com} \log \frac{q_{\theta\com}}{q\upp{j}}=
    \sum_j \wgt{}{j}\kldiv{q_{\theta\com}}{q\upp{j}}.
\end{align*}
Rewriting the right-hand side explicitly as
\begin{align*}
    H(z)\leq 
    \underbrace{\textstyle\sum_{j\in\allparticles} \omega^{(j)}
    \underbrace{\textstyle\sum_{\nu\in\{x,r,t\}}
    \tfrac{\norm{\nu(\theta^*) - \nu(\theta^{(j)})}^2}{2\sigma_\nu^2}}_{\kldiv{q_{\theta\com}}{q\upp{j}}}}_{\triangleq\bar{H}(z)}\leq M_H,
\end{align*}
it follows that $\bar{H}(z)$ is a continuous function of only the weights $\omega$ and the particles $\theta\upp{j}\in\Theta$, all confined to compact sets.
It follows that $\bar{H}(z)\leq M_H$ for $M_H>0$ big enough.

\subsection{Bayesian Update Barrier}
\label{sec:bayesian}
Fixing $\gamma > 0$, the information barrier with threshold $\gamma$ and the corresponding safe set are defined as
\begin{equation}
\begin{aligned}
    &b(z) \triangleq \hu(z) - \gamma,\\
    &C_I \triangleq \{z \in \mathcal{Z} : b(z) \geq 0\},\nonumber
\end{aligned}
\end{equation}
where the threshold $\gamma$ must satisfy $\gamma\ll M_H$ (to guarantee $b(z)>0$ at initialization).

We establish a one step PCBF condition for the Bayesian
update map $\Fsharp$.
The barrier change is
\begin{equation}\label{eqn:barrier_change}
  b(\zsharp) - b(z)
  = \textstyle\sum_{\nu \in \{x,r,t\}}
    \log \tfrac{S_\nu(z)}{S_\nu(\zsharp)}.
\end{equation}
Define the likelihood ratio
\begin{align*}\label{eqn:observation likelihood ratio}
    r\upp{j}(y) \triangleq
p(y|\xhat{}{j})\big{/}\textstyle\sum_i \wgt{}{i} p(y|\xhat{}{i}).
\end{align*}
Note that $\rswgt{}{j} = \wgt{}{j} r\upp{j}$ and
$\sum_j \wgt{}{j} r\upp{j} = 1$, by~\eqref{eqn:weight update}.
Then, the ratio
$S_\nu(\zsharp)/S_\nu(z)$ lies in the interval $[\min_j r\upp{j},\max_j r\upp{j}]$ as a convex combination of the
$r\upp{j}$.
It follows from~\eqref{eqn:barrier_change} by summing over $\nu\in\{x,r,t\}$ that
\begin{equation}\label{eq:rsp_bound}
  |b(\zsharp) - b(z)| \leq 3B(z,y),
\end{equation}
where the bound $B(z,y)$ is defined as
\begin{align*}\label{eqn:log bound on barrier change}
    B(z,y)\triangleq
    \textstyle\max \bigl(\vert\log\max_j r\upp{j}(y)\vert,\;
    \vert\log\min_j r\upp{j}(y)\vert\bigr).
\end{align*}

To bound $B(z,y)$ as a function of $y$, we compute
$\nabla_y \log r\upp{j}(y)$.
Since $p(y|\xhat{}{j}) \sim \mathcal{N}(\xhat{}{j}, \errcov)$
by~\eqref{eqn:weight update}, a direct computation gives
\begin{equation}
  \nabla_y \log r\upp{j}(y)
  = \errcov^{-1} 
    \bigl(\xhat{}{j} - \mu^{\scriptscriptstyle{\sharp}}(y)\bigr),\nonumber
\end{equation}
where $\mu^{\scriptscriptstyle{\sharp}}(y) \triangleq
\sum_i \rswgt{}{i}(y) \xhat{}{i}$
is the posterior weighted mean.
Since
$\mu^{\scriptscriptstyle{\sharp}}(y) \in
\conv{\{\xhat{}{i}\}_{i=1}^N}$
for every $y$, defining the particle cloud diameter
$D(z) \triangleq
\diam{\{\xhat{}{i}\}_{i=1}^N}$,
the gradient is uniformly bounded for every $y$ and $j$:
\begin{equation}\label{eq:lip}
  \norm{\nabla_y \log r\upp{j}(y)}
  \leq \norm{\errcov^{-1}} D(z).
\end{equation}
Denote this Lipschitz bound by
$L(z) \triangleq \norm{\errcov^{-1}} D(z)$ and consider the function
\begin{equation}
  \psi(z) \triangleq 
  B(z,\bar{y}(z))
  \geq 0,\nonumber
\end{equation}
where
\begin{equation}
  \bar{y}(z) \triangleq
  \arg\min_y \max_j \norm{y - \xhat{}{j}}\nonumber
\end{equation}
is the Chebyshev center~\cite[Chapter 8.5.1]{boyd2004convex} of the particle cloud.
%
%
\begin{lemma}\label{lem:bayesian} For any 
$\delta_1,\mu\in(0,1)$, define
\begin{align*}
    \Delta_b=\Delta_{b}(\delta_1,\mu)\triangleq
    \mu A_1 + (1-\mu)B_1,
\end{align*}
where
\begin{align*}
    A_1 &\triangleq \sup_{z\in\mathcal{Z}} 
        3\psi(z) + 3L(z)
        \Bigl(\bdd \Delta t+\kappa_n(\delta_1)\Bigr)\nonumber\\
    B_1 &\triangleq \sup_{z\in\mathcal{Z}} 3L(z)\norm{x_{q,\theta\com}(t_{k+1})- \bar{y}(z)}\nonumber\\
    \kappa_n(\delta_1) &\triangleq \sqrt{\norm{\errcov}(n+2\sqrt{n\log \tfrac{1}{\delta_1}}-2\log \delta_1)}
\end{align*}
Then for every $z \in \mathcal{Z}$, there exists $u \in\RR^n$ such that
\begin{equation}
  \prob{b(\zsharp)
    \geq b(z) - \Delta_b
     \big| z,u,x_k}
  \geq 1 - \delta_1.\nonumber
\end{equation}
\end{lemma}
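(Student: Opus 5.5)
The plan is to reduce the statement to two elementary facts and then use that $\Delta_b$ is a convex combination. Since $\mu\in(0,1)$, we have $\Delta_b=\mu A_1+(1-\mu)B_1\geq\min\{A_1,B_1\}$. The event $\{b(\zsharp)\geq b(z)-\min\{A_1,B_1\}\}$ is contained in $\{b(\zsharp)\geq b(z)-\Delta_b\}$. It therefore suffices to exhibit, for each $z$, an input $u$ for which the event with threshold $\min\{A_1,B_1\}$ has probability at least $1-\delta_1$. I will split into the cases $A_1\leq B_1$ and $B_1<A_1$. In each case the starting point is \eqref{eq:rsp_bound}, $|b(\zsharp)-b(z)|\leq 3B(z,y)$, together with the Lipschitz estimate \eqref{eq:lip}. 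Since $B(z,\cdot)$ is a maximum of absolute values of the $\log r\upp{j}$, it is $L(z)$-Lipschitz in $y$, which gives
\[
3B(z,y)\leq 3\psi(z)+3L(z)\norm{y-\bar{y}(z)} .
\]

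\emph{Case $A_1\leq B_1$ (obfuscating input).} Choose the constant input $u=(\bar{y}(z)-x_k)/\Delta t$, which is admissible because $\mathcal{U}=\RR^n$. By \eqref{eqn:dynamics}, this gives $\norm{x_{k+1}-\bar{y}(z)}\leq\bdd\Delta t$. Write $y=x_{k+1}+v$ with $v\sim\mathcal{N}(0,\errcov)$. Then $\norm{v}^2\leq\norm{\errcov}\,\chi^2_n$, and the Laurent--Massart tail bound $\prob{\chi^2_n\geq n+2\sqrt{nt}+2t}\leq\ee{-t}$ with $t=\log(1/\delta_1)$ yields $\norm{v}\leq\kappa_n(\delta_1)$ with probability at least $1-\delta_1$. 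On this event,
\[
3B(z,y)\leq 3\psi(z)+3L(z)\bigl(\bdd\Delta t+\kappa_n(\delta_1)\bigr)\leq A_1=\min\{A_1,B_1\},
\]
and the claim follows from the Lipschitz estimate above.

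\emph{Case $B_1<A_1$ (tracking input).} Choose $u$ so that the nominal successor lands on the reference point $x_{q,\theta\com}(t_{k+1})$. Only the lower side $b(\zsharp)-b(z)\geq -3B$ is needed. Here I would not use the crude bound \eqref{eq:rsp_bound}. Instead I would go back to \eqref{eqn:barrier_change} and control $\log\bigl(S_\nu(\zsharp)/S_\nu(z)\bigr)$ by integrating $\nabla_y\log r\upp{j}$ along the segment from the reference point to $y$, with base point the reference point rather than $\bar{y}(z)$. The target is a bound of the form $3L(z)\norm{x_{q,\theta\com}(t_{k+1})-\bar{y}(z)}\leq B_1$ holding with probability at least $1-\delta_1$.

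This second case is the step I expect to be the main obstacle. The difficulty is that $B_1$ contains neither $\psi$ nor the disturbance and noise terms, so the naive Lipschitz argument cannot close it. The first thing I would try is the interpolated input $u=\mu u_A+(1-\mu)u_B$, where $u_A$ and $u_B$ are the inputs of the two cases. It gives
\[
\norm{y-\bar{y}(z)}\leq (1-\mu)\norm{x_{q,\theta\com}(t_{k+1})-\bar{y}(z)}+\bdd\Delta t+\kappa_n(\delta_1).
\]
Combined with the Lipschitz estimate, this produces a bound $\mu$-weighted between the $A_1$-type and $B_1$-type terms. I would then try to absorb the residual $\psi$, disturbance and noise contributions by exploiting the convexity of $\log S_\nu$ in the weights, which should turn the additive estimate into the convex combination $\mu A_1+(1-\mu)B_1$ directly. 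If that fails, I would fall back on the case split above and seek a sharper one-sided estimate in the second case.
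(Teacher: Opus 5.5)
Your reduction via $\Delta_b\ge\min\{A_1,B_1\}$ and your first case are correct. With $u=(\bar y(z)-x_k)/\Delta t$, the $L(z)$-Lipschitz property of $B(z,\cdot)$ and the Laurent--Massart event give $|b(\zsharp)-b(z)|\le 3\psi(z)+3L(z)(\bdd\Delta t+\kappa_n(\delta_1))\le A_1$ with probability at least $1-\delta_1$. This proves the lemma whenever $A_1\le B_1$.

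The gap is the case $B_1<A_1$, and none of your proposed remedies closes it. (In that case your reduction even demands the threshold $B_1$, which is stricter than $\Delta_b$.) For the interpolated input $u=\mu u_A+(1-\mu)u_B$, your estimates give, on the Laurent--Massart event,
\[
|b(\zsharp)-b(z)|\le\underbrace{3\psi(z)+3L(z)\bigl(\bdd\Delta t+\kappa_n(\delta_1)\bigr)}_{\le A_1}+(1-\mu)\,\underbrace{3L(z)\norm{x_{q,\theta\com}(t_{k+1})-\bar y(z)}}_{\le B_1},
\]
so the threshold is $A_1+(1-\mu)B_1$, not a $\mu$-weighted combination of $A_1$ and $B_1$.

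The first group has coefficient $1$ for every $\mu$. The disturbance displacement $\bdd\Delta t$ and the noise radius $\kappa_n(\delta_1)$ are unaffected by $u$, and $\psi(z)$ is the value of $B$ at the base point. Convexity of $\log S_\nu$ in the weights acts on none of these. Re-basing the integration at $x_{q,\theta\com}(t_{k+1})$ only trades $\psi(z)$ for $B(z,x_{q,\theta\com}(t_{k+1}))\le\psi(z)+L(z)\norm{x_{q,\theta\com}(t_{k+1})-\bar y(z)}$, while the $\bdd\Delta t+\kappa_n(\delta_1)$ contribution remains. Closing the case $B_1<A_1$ would need a genuinely sharper, one-sided estimate on $\sum_\nu\log\bigl(S_\nu(\zsharp)/S_\nu(z)\bigr)$ than the two-sided bound~\eqref{eq:rsp_bound}. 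Your plan does not supply one.

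The paper's proof is exactly your fallback. It takes $u_b=\mu u_p+(1-\mu)u_{\mathrm{track}}$, with $u_p$ equal to your $u_A$. It derives $\norm{y-\bar y(z)}\le(1-\mu)\norm{x_{q,\theta\com}(t_{k+1})-\bar y(z)}+\bdd\Delta t+\norm{\xi}$, applies~\eqref{eq:rsp_bound}, \eqref{eq:lip} and the Laurent--Massart bound, and then asserts $|b(\zsharp)-b(z)|\le\Delta_b$. That last step does not follow, for the reason displayed above. As a sanity check: when $\mu\to0$ the claimed threshold tends to $B_1$, which contains no disturbance or noise term, although the observation is still perturbed by both.

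What the paper's argument, and your interpolated-input computation, actually establish is the lemma with $\Delta_b$ replaced by $A_1+(1-\mu)B_1$. Your first case alone gives the existence statement with threshold $A_1$. So your diagnosis that the Lipschitz argument cannot produce $\mu A_1+(1-\mu)B_1$ is sound. The appropriate repair is to correct the constant in the statement, not to search for an absorption trick. That correction also changes the feasibility condition $\beta>\min\{A_1,B_1\}+\Delta_r$ in Section~\ref{discussion}.
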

\begin{proof} Let $\mu\in(0,1)$.
Select $u_p$ to be the privacy controller such that $x_k + \Delta t u_p = \bar{y}(z)$.
Then the controller $u_b$ for maintain privacy while following the reference path is defined as\footnote{Note that $\mu=0$ corresponds to no effort at obfuscation by the agent.}
\begin{align*}
    u_b \triangleq \mu u_p + (1-\mu) (x_{q,\theta\com}(t_{k+1}) - x_k).
\end{align*}
Under the agent dynamics~\eqref{eqn:dynamics}, we have
\begin{align*}
    \norm{x_{k+1} - \mu \bar{y} - (1-\mu)x_{q,\theta\com}(t_{k+1})}\leq \bar{d}\Delta t
\end{align*}
Then, the distance between the observation and the Chebyshev center satisfies
\begin{align*}
    \norm{y - \bar{y}(z)} \leq (1-\mu)\norm{x_{q,\theta\com}(t_{k+1}) - \bar{y}} +\bdd \Delta t + \norm{\xi}.
\end{align*}
By~\eqref{eq:rsp_bound} and the Lipschitz bound~\eqref{eq:lip}, 
\begin{align*}
|b(\zsharp) - b(z)| \leq 3\left[\psi(z) + L(z)\norm{y-\bar{y}(z)}\right].
\end{align*}
By a computation following~\cite[Lemma 1]{laurent2000adaptive}, one has 
\begin{align*}
    \norm{\xi} \leq \sqrt{\norm{\errcov}(n + 2\sqrt{n\log \tfrac{1}{\delta_1}}-2\log(\delta_1))} = \kappa_n(\delta_1)
\end{align*}
with probability at least $1-\delta_1$.
It follows that $|b(\zsharp) - b(z)| \leq \Delta_{b}$, hence $b(\zsharp) \geq b(z) - \Delta_{b}$.\qed
\end{proof}

\begin{remark}\label{remark3}
Note that, to enforce the tracking error condition \eqref{eqn:envelope constraint}, the value of $\mu$ is required to satisfy $\norm{x_{k+1} - x_{q,\theta\com}(t_{k+1})}\leq \varrho(\ts{k+1})$, which will be guaranteed by:
\begin{align}\label{eqn:mu upper bound}
    \mu\leq
    \mu_{max}(k)\triangleq\min\left\{
        1,    \frac{\varrho(\ts{k+1}) - \bdd \Delta t}{\norm{x_{q,\theta\com}(t_{k+1}) - \bar{y}}}
    \right\}.
\end{align}
To see this, use the triangle inequality in
\begin{align*}
    &\norm{x_{k+1} -x_{q,\theta\com}(t_{k+1})}\\
    &\leq\norm{x_{k+1} - \mu \bar{y} - (1-\mu)x_{q,\theta\com}(t_{k+1})}\\
    &+\norm{\mu (\bar{y}- x_{q,\theta\com}(t_{k+1}))}\\
    &\leq \bar{d}\Delta t +\mu\norm{(\bar{y}- x_{q,\theta\com}(t_{k+1}))}
\end{align*}
The value of $\mu_{max}$ is obtained by requiring the right-hand side to be less than or equal to $\varrho(\ts{k+1})$.
\end{remark}

\subsection{Resampling Update Barrier}\label{subsec:resampling keep lower bound}
The resampling step in the RBPF update (Section~\ref{intent inference}) introduces a discontinuous change in the empirical posterior when $\zsharp\in \mathcal{Z}$ satisfies $\neff=\neff(\zsharp)\leq N_0$, recall~\eqref{eqn:effective sample size}.
Denote the set of all such $\zsharp$ by $\mathcal{Z}_0$.

Particles outside of $\rsmajor{}\subseteq \allparticles$, the set of $\neff$ highest weight particles, are discarded.
The rest are retained in the filter and replicated with weights $\tfrac{1}{N}$, each in $N_a=\lfloor \rswgt{}{a} N / \rswgteff \rfloor$ copies, where $\rswgteff=\sum_{a \in \rsmajor{}} \rswgt{}{a}$.
Let $\major{}$ denote the set of indices of replicated effective particles. 
$\major{}$ has cardinality $\sum_{a \in \rsmajor{}} N_a$.
The remaining particles are reinitialized by sampling from a distribution $\particledistro$ over $\Theta$.
After resampling, all weights are assigned uniformly, $\wgt{}{i} = \tfrac{1}{N}$ for all $i\in\allparticles$.
If no resampling is triggered ($\neff> N_0$), then set $\major{}:=\rsmajor{}$ and $\wgt{}{i} := \rswgt{}{i}$ for all $i \in \allparticles$.

Recalling~\eqref{eqn:weighted avg estimator post resampling}, let $\major{}\comp=\allparticles\minus\major{}$, with $N'\triangleq|\major{}\comp|\leq N_0$.
%
The post resampling distribution $\hqavg_{ }$ can be written as:
\begin{align*}
    \hqavg_{ }\triangleq
    \tfrac{1}{N}\textstyle\sum_{a\in\major{}}\qrep{}{a} + \tfrac{1}{N} \textstyle\sum_{b\in\major{}\comp} \qrep{}{b}.
\end{align*}
%
%
%
Similarly to~\eqref{eqn:barrier_change}, a probabilistic bound on the barrier value difference
\begin{align}\label{eqn:second barrier change}
    b(z')-b(\zsharp)=
    \textstyle\sum_{\nu\in\{x,r,t\}}\log\frac{S_\nu(\zsharp)}{S_\nu(z')}
\end{align}
is being sought.
One has:
\begin{align*}
    N\cdot S_\nu(z')
    &= \underbrace{\sum_{a \in\major{}} \gamma_\nu(\theta\upp{a})}_{\triangleq A_{\nu}}+\underbrace{\sum_{b \in \major{}\comp} \gamma_\nu(\theta\upp{b})}_{\triangleq Y_{\nu}}.
\end{align*}
Note that $A_{\nu}$ is a  deterministic function of $\zsharp$, while $Y_{\nu}$ a random variable.
Define $\bar{Y}_{\nu}\triangleq\tfrac{1}{N_0}Y_{\nu}\leq\tfrac{1}{N'
}Y_{\nu}\in(0,1)$.
%
\begin{lemma}[Resampling barrier preservation]\label{lem:resampling_conservative}
For any $\delta_2\in (0,1)$, let $\epsilon=\epsilon(\delta_2) \triangleq \sqrt{\frac{1}{2N_0}\log \frac{3}{\delta_2}}$ and define
\begin{align*}
    \Delta_r
    \triangleq \sup_{\zsharp\in\mathcal{Z}_0}\sum_{\nu \in \{x,r,t\}} 
    \log\Bigl(\frac{A_{\nu}+N_0\bigl(\mathbb{E}_{\particledistro}(\bar{Y}_{\nu})+\epsilon\bigr)}
    {N S_\nu(\zsharp)}\Bigr).
\end{align*}
Note that $\Delta_r$ depends on $\delta_2$.
Then, for any $\zsharp\in\mathcal{Z}_0$,
\begin{align*}
    \prob{b(z') \geq b(\zsharp) - 
    \Delta_r\big|\zsharp} \geq 1 - \delta_2
\end{align*}
holds for the post-resampling state $z' = R(\zsharp, \zeta)$.
\end{lemma}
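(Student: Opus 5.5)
The plan is to combine the exact barrier-change identity~\eqref{eqn:second barrier change} with a Hoeffding concentration bound on the reinitialized particles, applied separately to each $\nu\in\{x,r,t\}$ and merged by a union bound. The split $\delta_2/3$ per coordinate is what produces the $\log\frac{3}{\delta_2}$ inside $\epsilon$.

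\textbf{Step 1: reduce to upper tails.} By~\eqref{eqn:second barrier change}, $b(z')\geq b(\zsharp)-\Delta_r$ is equivalent to
\begin{align*}
\sum_{\nu\in\{x,r,t\}}\log\frac{N S_\nu(z')}{N S_\nu(\zsharp)}\leq\Delta_r .
\end{align*}
Since $NS_\nu(z')=A_\nu+Y_\nu$ and $\log$ is increasing, it suffices to show that, with probability at least $1-\delta_2$, simultaneously for all three $\nu$,
\begin{align*}
Y_\nu\leq N_0\bigl(\mathbb{E}_{\particledistro}(\bar Y_\nu)+\epsilon\bigr).
\end{align*}
On that event each summand is at most the corresponding term in the definition of $\Delta_r$ at the given $\zsharp\in\mathcal{Z}_0$. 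That term is in turn bounded by the supremum over $\mathcal{Z}_0$ defining $\Delta_r$.

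\textbf{Step 2: concentration.} Conditioned on $\zsharp$, the set $\major{}\comp$ and the deterministic quantity $A_\nu$ are fixed. By Remark~\ref{remark}, the reinitialized intents $\theta\upp{b}$, $b\in\major{}\comp$, are i.i.d.\ draws from $\particledistro$. Hence $Y_\nu$ is a sum of $N'$ i.i.d.\ random variables $\gamma_\nu(\theta\upp{b})\in(0,1]$.

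To match the normalization of $\bar Y_\nu=Y_\nu/N_0$, I would view $Y_\nu$ as a sum of $N_0$ independent $[0,1]$-valued terms, padding with $N_0-N'$ zero terms. Hoeffding's inequality then gives
\begin{align*}
\prob{\bar Y_\nu-\mathbb{E}\bar Y_\nu\geq\epsilon}\leq \ee{-2N_0\epsilon^2}=\delta_2/3 .
\end{align*}
Here $\mathbb{E}_{\particledistro}(\bar Y_\nu)$ is read as the expectation of this $N_0$-normalized sum.

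\textbf{Step 3: union bound.} A union bound over the three values of $\nu$ gives the simultaneous event with probability at least $1-\delta_2$, which completes the proof. If $\Delta_r$ is negative or finite, nothing more is needed. Finiteness of the supremum is not required for the inequality itself.

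The main obstacle is the normalization subtlety in Step 2. The count $N'$ is random through $\zsharp$ but deterministic given $\zsharp$, and it satisfies $N'\leq N_0$. Hoeffding applied with $N'$ terms only yields a deviation $\sqrt{\log(3/\delta_2)/(2N')}$, which is larger than $\epsilon$. The zero-padding viewpoint, with the expectation understood accordingly, is what makes the stated $\epsilon$ with $N_0$ correct. I would make this interpretation explicit, and note that the alternative reading $\mathbb{E}\bar Y_\nu=\frac{N'}{N_0}\mathbb{E}_{\particledistro}\gamma_\nu$ is bounded by $\mathbb{E}_{\particledistro}\gamma_\nu$. Either reading keeps the bound valid, since only an upper tail is used and each $\gamma_\nu$ is bounded by $1$.
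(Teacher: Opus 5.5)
Your proposal is correct and follows the paper's proof: Hoeffding for each $\nu$ at level $\delta_2/3$, a combination over the three values of $\nu$, and then monotonicity of $\log$ together with $N'\le N_0$ and the supremum over $\mathcal{Z}_0$. Your union bound is slightly cleaner than the paper's appeal to independence across $\nu$, which gives $(1-\delta_2/3)^3\ge 1-\delta_2$, because it reaches the same $1-\delta_2$ without needing any independence.

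The normalization \emph{obstacle} you flag is not real. Hoeffding applied directly to the $N'$ terms with deviation $N_0\epsilon$ already gives $\exp(-2N_0^2\epsilon^2/N')\le\exp(-2N_0\epsilon^2)$, since $N'\le N_0$. Your $\sqrt{\log(3/\delta_2)/(2N')}$ is a deviation of $Y_\nu/N'$, not of $\bar{Y}_\nu$. So the zero-padding is harmless but unnecessary, and your two ``readings'' of $\mathbb{E}_{\particledistro}(\bar{Y}_\nu)$ are the same quantity, $\tfrac{N'}{N_0}\mathbb{E}_{\particledistro}\gamma_\nu$.
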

\begin{proof} If $\zsharp\in\mathcal{Z}$ is such that $\neff(\zsharp)\leq N_0$, then the H\"{o}ffding inequality yields
\begin{align*}
    \prob{
        \bar{Y}_{\nu}\geq
        \mathbb{E}_{\particledistro}(\bar{Y}_{\nu}) +\epsilon
        \big|\zsharp
    }\leq \exp(-2N_0\epsilon^2)
    =\frac{\delta_{2}}{3}
\end{align*}
Then, by the independence in~\eqref{eqn:probabilistic representation of intent}, the event
\begin{align*}
    \mathscr{E} &\triangleq\bigcap_{\nu\in\{x,r,t\}} \{\bar{Y}_{\nu}\leq \mathbb{E}_{\particledistro}(\bar{Y}_{\nu}) +\epsilon)\}
\end{align*}
satisfies $\prob{\mathscr{E}}\geq(1-\tfrac{\delta_2}{3})^3\geq 1-3\tfrac{\delta_2}{3}$.
Since $\log$ is monotone increasing and $N'\leq N_0$, the equality~\eqref{eqn:second barrier change} gives rise to
\begin{align*}
    b(z')-b(z\sh) &= \sum_\nu \log \left( \frac{NS_{\nu}(\zsharp)}{A_{\nu} +N_0\bar{Y}_{\nu}}
    \right)\\
    &\geq\sum_\nu \log \left( \frac{NS_{\nu}(\zsharp)}{A_{\nu}+N_0(\mathbb{E}_{\particledistro}(\bar{Y}_{\nu})+\epsilon)}
    \right)
\end{align*}
being valid over the event $\mathscr{E}$.
This proves the Lemma.\qed
\end{proof}

    %
    %

\subsection{Composite PCBF Bound}\label{subsec:privacy_spec}
Both Lemma~\ref{lem:bayesian} and Lemma~\ref{lem:resampling_conservative} 
establish additive lower bounds on the barrier change~\eqref{eqn:barrier_change} with high probability, rather than on the multiplicative form required by the PCBF condition~\eqref{eqn:pcbf_one_step}.
The multiplicative form is recovered by restricting to the sublevel set 
\begin{equation}
    \CIbeta \triangleq \{z \in \mathcal{Z} : b(z) \geq \beta\}.\nonumber
\end{equation}
Indeed, for any $\Delta\in(0,\beta)$ and $z\in\CIbeta$ one has
\begin{align}\label{eqn:additive to multiplicative conversion}
b(z) - \Delta &= b(z)\Bigl(1 - \tfrac{\Delta}{b(z)}\Bigr) \geq 
b(z)\Bigl(1 - \tfrac{\Delta}{\beta}\Bigr) = \alpha b(z),
\end{align}
where $\alpha\triangleq 1 - \tfrac{\Delta}{\beta} \in (0,1)$.
The following theorem applies this conversion to compose the 
two additive bounds into a single PCBF condition.

%
%
%
%
%
\begin{theorem}[Composite information PCBF]\label{thm:composite}
Let $\delta_1, \delta_2\in (0,1)$ and define $\delta_f \triangleq 1 - (1-\delta_1)(1-\delta_2)$.
Let $\Delta_{tot} \triangleq \Delta_{b} + \Delta_r$ and let $\beta > \Delta_{tot}$.
Define $\alpha \triangleq 1 - \tfrac{\Delta_{tot}}{\beta}\in (0,1)$.
Then for every $z \in \CIbeta$, there exists $u \in\RR^n$ such that
\begin{equation}
  \prob{b(F_I(z,u,\eta)) \geq \alpha b(z) \middle| z, u} \geq 1 - \delta_f.\nonumber
\end{equation}
\end{theorem}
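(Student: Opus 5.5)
The plan is to compose the two one-step additive bounds, Lemma~\ref{lem:bayesian} for $\Fsharp$ and Lemma~\ref{lem:resampling_conservative} for $\mathcal{R}$, through the decomposition $F_I(z,u,\eta)=\mathcal{R}(\Fsharp(z,u,\xi),\zeta)$. Then use the conditional independence of $\xi$ and $\zeta$ to multiply the success probabilities, and convert the resulting additive bound into the multiplicative PCBF form via~\eqref{eqn:additive to multiplicative conversion}.

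\emph{Choice of input and the Bayesian event.} Fix $z\in\CIbeta$ and take $u=u_b$, the input constructed in the proof of Lemma~\ref{lem:bayesian}, with the same $\mu$. By that lemma, the event
\[
\mathscr{E}_1\triangleq\{b(\zsharp)\geq b(z)-\Delta_b\}
\]
has conditional probability at least $1-\delta_1$, where $\zsharp=\Fsharp(z,u,\xi)$. This event depends only on $\xi$ (and the bounded disturbance).

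\emph{The resampling event.} Condition on $\zsharp$ and split into two cases.
\begin{itemize}
\item If $\zsharp\notin\mathcal{Z}_0$, no resampling occurs, so $z'=\zsharp$ and $b(z')=b(\zsharp)$ deterministically. Then $b(z')\geq b(\zsharp)-\Delta_r$ holds with probability one, provided $\Delta_r\geq 0$.
\item If $\zsharp\in\mathcal{Z}_0$, Lemma~\ref{lem:resampling_conservative} gives $\prob{b(z')\geq b(\zsharp)-\Delta_r\mid\zsharp}\geq 1-\delta_2$.
\end{itemize}
Let $\mathscr{E}_2$ denote this event. Since $\zeta$ (the fresh draws from $\particledistro$) is independent of $\xi$ by Assumption~\ref{assump:markov_info}, the tower property gives
\[
\prob{\mathscr{E}_1\cap\mathscr{E}_2\mid z,u}=\mathbb{E}\bigl[\mathbf{1}_{\mathscr{E}_1}\,\prob{\mathscr{E}_2\mid\zsharp}\bigr]\geq(1-\delta_1)(1-\delta_2)=1-\delta_f.
\]
On $\mathscr{E}_1\cap\mathscr{E}_2$ we have $b(z')\geq b(z)-\Delta_b-\Delta_r=b(z)-\Delta_{tot}$. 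Since $z\in\CIbeta$ and $0<\Delta_{tot}<\beta$, the conversion~\eqref{eqn:additive to multiplicative conversion} yields $b(z')\geq(1-\Delta_{tot}/\beta)\,b(z)=\alpha b(z)$.

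\emph{Main obstacle.} The delicate point is the no-resampling case, which requires $\Delta_r\geq 0$, and the requirement $\Delta_{tot}>0$ needed for $\alpha<1$.
\begin{itemize}
\item For $\Delta_r$, I would argue nonnegativity directly if possible; otherwise I would replace $\Delta_r$ by $\max(\Delta_r,0)$, which only weakens the statement.
\item For $\Delta_b$, nonnegativity is automatic since $\psi,L\geq0$.
\end{itemize}
I would also verify that the conditioning on $\zsharp$ used in Lemma~\ref{lem:resampling_conservative} is legitimate under the measurability of $\Fsharp$, so that the tower-property step above is justified.
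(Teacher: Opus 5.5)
Your proposal is correct and follows the paper's proof: take the input from Lemma~\ref{lem:bayesian}, define the Bayesian and resampling events, multiply the success probabilities using the independence of $\xi$ and $\zeta$, and convert via \eqref{eqn:additive to multiplicative conversion}. Your tower-property justification and your treatment of the no-resampling case add care the paper's proof omits. The required $\Delta_r\geq 0$ does hold: at a configuration where all particles share an intent value minimizing each $\gamma_\nu$ over $\Theta$, every logarithm in the supremum is positive because of the $N_0\epsilon$ term.
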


\begin{proof}
Fix $z \in \CIbeta$, so $b(z) \geq \beta$.
Let $u$ be the control whose existence is guaranteed by Lemma~\ref{lem:bayesian}.
Define the events
\begin{align*}
  E_1 &\triangleq \bigl\{b(\Fsharp(z,u,\xi)) \geq b(z) - \Delta_{b}\bigr\}, \\
  E_2 &\triangleq \bigl\{b(R(\zsharp,\zeta)) \geq b(\zsharp) - \Delta_{r}\bigr\}.
\end{align*}
By Lemma~\ref{lem:bayesian}, $\prob{E_1 \mid z, u} \geq 1 - \delta_1$.
By Lemma~\ref{lem:resampling_conservative}, $\prob{E_2 \mid \zsharp} \geq 1 - \delta_2$.
Since $\xi$ and $\zeta$ are independent,
\begin{equation*}
  \prob{E_1 \cap E_2 \mid z, u} \geq (1-\delta_1)(1-\delta_2) = 1 - \delta_f.
\end{equation*}
On the event $E_1 \cap E_2$, the chain of inequalities
\begin{align*}
  b(F_I(z,u,\eta)) &= b(R(\zsharp,\zeta))\geq b(\zsharp) - \Delta_{r} \geq b(z) - \Delta_{tot}
\end{align*}
holds.
Since $b(z) \geq \beta > \Delta_{tot}$, applying~\eqref{eqn:additive to multiplicative conversion} completes the proof.\qed
\end{proof}

\begin{remark}[Separation of privacy and tracking constraints]\label{rem:separation}
  Theorem~\ref{thm:composite} establishes the existence of a feasible input $u \in U$ satisfying the PCBF condition without reference to the tracking envelope $\varrho$.
  %
  %
  When both constraints can be simultaneously satisfied, the feasible controller set is nonempty and the agent can maintain privacy while tracking.
  When the tracking envelope becomes too restrictive, the feasible set of the joint program may be empty.
\end{remark}

\section{Discussion: Joint Feasibility}\label{discussion}
Theorem~\ref{thm:composite} guarantees that for every $z \in \CIbeta$, $\beta > \Delta_{tot}$, at each time $\ts{k}$ there exists at least one control input satisfying the PCBF privacy condition.
However, according to Remark~\ref{remark3}, the success of this control input in maintaining the tracking envelope constraint is dependent on our ability to select $\mu$ at time $k$ according to~\eqref{eqn:mu upper bound}.

The requirement that $\beta>\Delta_{tot}$ represents an additional tension between the problem constraints.
On one hand, $\beta$ is provided, essentially, by the initial condition, as it must be a lower bound on $\kldiv{q_{\theta\com}}{\hqavg_0}$ (the agent is initialized in $C_I^\beta$).
On the other hand, at the times $\ts{k}$ one has
\begin{align*}
    \Delta_{tot}=\Delta_r+\Delta_b=\Delta_r+\mu A_1+(1-\mu)B_1,
\end{align*}
where $\mu$ is allowed to depend on $k$, and influences the control input being selected.
Thus $\beta$ must satisfy
\begin{align}\label{eqn:beta constraint}
    \beta>\min\{A_1,B_1\}+\Delta_r,
\end{align}
or the joint PCBF condition will fail for any sequence of choices of the $\mu$ values.
Next, given $\beta$ which satisfies~\eqref{eqn:beta constraint}, it is necessary to select $\mu$ to guarantee $\Delta_{tot}<\beta$.
Equivalently,
\begin{align*}
    \beta-\Delta_r-B_1 > \mu(A_1-B_1).
\end{align*}
If $A_1-B_1\geq 0$ at time $k$, the resulting upper bound combines with~\eqref{eqn:mu upper bound} to produce a value for $\mu$ (and through it, a control input maintaining both mission constraints); if $A_1-B_1<0$ results in a positive lower bound on $\mu$, which may or may not be consistent with~\eqref{eqn:mu upper bound}. 
However, it follows from~\eqref{eqn:mu upper bound}, that if $\varrho$ grows fast enough, no restriction is imposed on $\mu$ that would conflict with this lower bound.

\section{Conclusion}\label{conclusion}
This paper addresses the problem of intent privacy for a goal-directed agent operating in the presence of a passive adversary running an online Bayesian inference algorithm to infer the agent's goal.
%
%
Rather than modify the task or treat privacy as an auxiliary penalty, intent privacy was formulated as a control problem with piecewise-constant actuation over a state space extended by the latent belief state, while treating the RBPF information state $z = (\theta, \hat{x}, \omega)$ as a controlled Markov process and the observer's KL-based information leakage as the quantity to be regulated.

The main technical contribution is a composite probabilistic control barrier function result that provides finite-horizon privacy-in-probability guarantees for the full RBPF update cycle.
The Bayesian update step produces an additive lower bound on the barrier increment by controlling the observation $y$ toward the Chebyshev center of the particle cloud.
The resampling step produces a second additive lower bound by applying the H\"{o}ffding's inequality to the random contribution of reinitialized particles.
The two bounds compose multiplicatively through a union bound argument, yielding a single PCBF condition covering the entire update cycle with failure probability $\delta_f$.

A complementary contribution exposes tensions between the obfuscation component of the mission and its tracking component.
It is demonstrated that large uncertainties of the RBPF representation of the the agent's state can render the joint mission infeasible in the presence of tight tracking envelope constraints.
At the same time, feasibility is recovered if the envelope constraints are relaxed, and the relation is quantified.

Several directions remain open for future investigation, not least among them is obtaining convergence results which characterize the convergence (with rates) of the RBPF intent-filtering framework.
Such results are inextricably linked to the joint feasibility of the tracking and obfuscation problems (see Lemma~\ref{lem:bayesian}), especially in relation to establishing an understanding of the natural growth rates of tracking envelopes permissible for effective obfuscation.
Further work developing numerical and hardware experiments is necessary for validation.

\bibliographystyle{plainnat}
\bibliography{ifacconf}

\end{document}